\newcommand{\inv}{^{-1}}
\newcommand{\p}{\varphi}
\newcommand{\ov}[1]{\ensuremath{\overline {#1}}}
\newcommand{\Thmname}{Theorem}
\newcommand{\Propname}{Proposition}
\newcommand{\Lemmaname}{Lemma}
\newcommand{\Definitionname}{Definition}
\newtheorem{Thm}{\Thmname}[section]
\newtheorem{Prop}[Thm]{\Propname}
\newtheorem{Lemma}[Thm]{\Lemmaname}
{\theoremstyle{definition}
}
{\theoremstyle{remark}
}
\theoremstyle{remark}
\newtheorem{Example}[Thm]{Example}}
\newtheorem*{Lemma*}{Lemma}
\numberwithin{equation}{section}
\author{Benjamin Steinberg}
\address{Department of Mathematics \\ City College of New York}
\email{bsteinberg@ccny.cuny.edu}
\title{Topological dynamics and recognition of languages}
\date{March 23, 2010}
\thanks{This was written while the author was at the School of Mathematics of Carleton University.  The author was supported at the time by an NSERC grant}
\begin{document}

\begin{abstract}
We define compact automata and show that every language has a unique minimal compact automaton.  We also define recognition of languages by compact left semitopological monoids and construct the analogue of the syntactic monoid in this context.  For rational languages this reduces to the usual theory of finite automata and finite monoids.
\end{abstract}
\maketitle
\section{Introduction}
The theory of automata and syntactic monoids is quite successful for rational languages~\cite{EilenbergA,Eilenberg,qtheor,Almeida:book} because the algebraic invariants are finite automata, finite monoids and finite transformation monoids, which are objects with a fair amount of structure.  On the other hand, the corresponding theory for non-rational languages has been far less successful, in a great part due to the fact that infinite semigroups and transformation semigroups have little structure.  In particular, one may have no non-identity idempotents and hence no maximal subgroups.

The idea here is to replace the minimal automaton of a language with a topological dynamical system and the syntactic monoid with a compact left semitopological monoid.  For the case of a rational language this reduces to the standard theory.  However, ever left semitopological compact monoid that is not a group has non-identity idempotents by a theorem of Ellis~\cite{Ellis}.  Thus we have a much richer structure.  The reader should consult~\cite{Hindman,Akin,Ellis} for more on left semitopological compact semigroups and in particular with respect to applications to dynamics and Ramsey theory.

It seems clear that one can then develop varieties of left semitopological compact monoids, define ``pseudo''identities using the Stone-\v{C}ech compactification of the free monoid and given an Eilenberg correspondence between varieties of languages and varieties of left semitopological compact monoids.  I hope someone will develop this theory and find applications for it, or perhaps at some future point I will enlarge this draft.

\section{Definitions}
Let $A$ be a finite set and $A^*$ be the corresponding free monoid.  By a \emph{topological dynamical system} over $A$ we mean a compact Hausdorff space $X$ equipped with a right action of $A^*$ by continuous functions.  We say the system is \emph{metric} if $X$ is metric.  Such a right action is uniquely determined by a continuous transition function $\delta\colon X\times A\to X$.   A \emph{morphism} of topological dynamical systems $(X,A,\delta)\to (Y,A,\lambda)$ is a continuous map $\p\colon X\to Y$ such that $\p(xa)=\p(x)a$ for all $x\in X$ and $a\in A$.

A \emph{topological automaton} $\mathscr A$ is a $5$-tuple $(X,A,x_0,\delta,T)$ where $(X,A,\delta)$ is a topological dynamical system, $x_0\in X$ and $T\subseteq X$ is a clopen subset.
Of course, we say a topological automaton is \emph{metric} if the underlying topological dynamical system is metric.
The language $L(\mathscr A)$ accepted by $\mathscr A$ is the set of all words $w\in A^*$ such that $x_0w\in T$. We say that $\mathscr A$ is \emph{trim} if $x_0A^*$ is dense in $X$.  Clearly, replacing $X$ by $\ov{x_0A^*}$ and $T$ by $\ov{x_0A^*}\cap T$ results in a trim topological automaton accepting the same language.

\begin{Prop}\label{existenceofrecognizer}
Let $L\subseteq A^*$.  Let $\chi_L$ be the characteristic function of $L$ and let $X=\{0,1\}^{A^*}$ equipped with the product topology.  Define a transition function $\delta\colon X\times A\to X$ by $\delta(f,a)(w)=f(aw)$. Let $T=\{f\in X\mid f(\varepsilon)=1\}$.  Then $L$ is accepted by the metric topological automaton $(X,A,\chi_L,\delta,T)$.
\end{Prop}
\begin{proof}
The set $T$ is clearly clopen.
Trivially, $w\in L$ if and only if $\chi_L(w)=1$ if and only if $(\chi_Lw)(\varepsilon)=1$ if and only if $\chi_Lw\in T$.
\end{proof}

We define the \emph{minimal topological automaton} of $L$ to be \[\mathscr A_L = (\ov{\chi_LA^*},A,\chi_L,\delta, T\cap \ov{\chi_LA^*}).\]  It is the closure of the state set $\chi_LA^*$ of the classical minimal automaton in $\{0,1\}^{A^*}$.  In particular, if $L$ is rational, then $\mathscr A_L$ is finite and is the usual minimal automaton.  Let us prove that $\mathscr A_L$ is truly minimal.  For convenience of notation, put $\mathscr O_L = \ov{\chi_L A^*}$ and $T_L = \{f\in \mathscr O_L\mid f(\varepsilon)=1\}$.

\begin{Thm}\label{minimization}
Let $L\subseteq A^*$ be a language and suppose $\mathscr A=(Y,A,y_0,\lambda,F)$ is a trim topological automaton accepting $L$.  Then there is a unique surjective morphism $\p\colon (Y,A,\lambda)\to (\mathscr O_L,A,\delta)$ of topological dynamical systems over $A$ such that $\p(y_0)=\chi_L$.  Moreover, $\p(F)=T_L$.
\end{Thm}
\begin{proof}
If $\p$ exists, it is evidentally unique and surjective by trimness.  Let us prove existence; we continue to put $X=\{0,1\}^{A^*}$. There is a natural left action of $A^*$ on the space $\{0,1\}^Y$ given by $uf(y) = f(yu)$.
For $y\in Y$, define $\p_y\colon A^*\to \{0,1\}$ by $\p_y(u) = u\chi_F(y)$.  Since $T$ is clopen, $\chi_F$ is continuous and hence one easily checks that $\p\colon Y\to X$ given by $\p(y)=\p_y$ is continuous for the product topology on $X$ (using that multiplication by $u$ is continuous).  Now $\p_{y_0}(u) = u\chi_F(y_0) = \chi_F(y_0u) = \chi_L(u)$ because $\mathscr A$ accepts $L$.  It now follows that $\p$ takes $Y$ into $\mathscr O_L$ by the trimness of $\mathscr A$.  Also note that if $y\in F$, then $\p_y(\varepsilon) = \chi_F(y) = 1$ and so $\p(y)\in T_L$.  Conversely, if $t\in T_L$ and $t=\p(y)$, then $\chi_F(y) = \p_y(\varepsilon) = t(\varepsilon)=1$.  Thus $y\in F$ and so $t\in \p(F)$.  This completes the proof.
\end{proof}

Next we want to define the enveloping monoid of a topological dynamical system $(X,A,\delta)$.  Let $S=X^X$ where we let elements of $S$ act on the \emph{right} of $X$.  We topologize $S$ with the product topology and hence it is compact Hausdorff.  In fact, $S$ is a compact left semitopological semigroup, that is, a non-empty, compact Hausdorff space with a semigroup structure such that, for each $s\in S$, the left translation $t\mapsto st$ is continuous.  Indeed, if $f_{\alpha}\to f$ is a net, then $xgf_{\alpha}\to xgf$ precisely because we are using the topology of pointwise convergence. The \emph{topological center} of a compact left semitopological semigroup $T$ consists of all elements $s\in T$ for which the right translation $t\mapsto ts$ is also continuous.  It is a submonoid of $S$ containing the algebraic center.  The topological center of $S=X^X$ is the submonoid of continuous functions.  The theory of compact left semitopological semigroups is quite well developed, see for instance~\cite{Hindman,Akin}.

\begin{Prop}\label{canclose}
Let $S$ be a compact left semitopological monoid and suppose $T$ is a submonoid of the topological center of $S$.  Then $\ov T$ is a closed submonoid and hence compact left semitopological in its own right.
\end{Prop}
\begin{proof}
Suppose that $s,t\in \ov T$ and $s_{\alpha}\to s$ and $t_{\beta}\to t$ are nets in $T$.  Then $st = \lim_{\beta} st_{\beta}$ by continuity of left translation by $s$.  Hence $st=\lim_{\beta} \lim_{\alpha}s_{\alpha}t_{\beta}$ by continuity of right translation by the $t_{\beta}$ (using that $T$ is contained in the topological center).  Thus $st\in \ov T$ and so $\ov T$ is a closed submonoid.
\end{proof}

Let $(X,A,\delta)$ be a topological dynamical system over $A$ and let $\delta^*\colon A^*\to X^X$ be the induced map.  Then the \emph{transition monoid} of $\mathscr A=(X,A,\delta)$ is $M(\mathscr A)=\delta^*(A^*)$ and the \emph{enveloping monoid} is $E(\mathscr A)=\ov{M(\mathscr A)}$, which is a compact left semitopological monoid by Proposition~\ref{canclose}.  Moreover, the action $X\times E(\mathscr A)\to X$ is an \emph{Ellis action}~\cite{Akin}, meaning that, for each $x\in X$, the $s\mapsto xs$ is continuous (actually in~\cite{Akin} left actions are considered, but this is immaterial).

Now if we consider the minimal topological automaton $\mathscr A_L$ of $L\subseteq A^*$, then $M(\mathscr A_L)$ is the usual syntactic monoid of $L$, written $M_L$ for short.  We define $E(\mathscr A_L)$ to be the \emph{enveloping syntactic monoid of $L$} and denote it $E_L$ for brevity.  If $L$ is rational, then $E_L$ is the usual syntactic monoid of $L$.

Our goal is to show that $E_L$ is minimal in the usual sense among compact left semitopological monoids recognizing $L$.   Let us say that a homomorphism $\p\colon A^*\to M$ with $M$ a compact left semitopological monoid \emph{recognizes} $L$ if $\p(A^*)$ is contained in the topological center of $M$ and there is a clopen subset $F$ of $M$ such that $L = \p\inv(F)$.  Of course, one can then replace $M$ by $\ov{\p(A^*)}$ and $F$ by $F\cap \ov{\p(A^*)}$ thanks to Proposition~\ref{canclose}.

\begin{Prop}\label{clopenfinalelements}
Let $\mathscr A=(X,A,x_0,\delta,T)$ be a topological automaton accepting $L\subseteq A^*$.  Then the natural morphism $\delta^*\colon A^*\to E(\mathscr A)$ recognizes $L$.
\end{Prop}
\begin{proof}
First of all $\delta^*(A^*)$ is contained in the topological center of $E(\mathscr A)$ since each of its elements is a continuous function on $X$.
Let \[F=\{s\in E(\mathscr A)\mid x_0s\in T\}.\]  Clearly one has $(\delta^*)\inv (F)=L$. It remains to check that $F$ is clopen.  But if $f_{x_0}\colon S\to X$ is the map $f_{x_0}(s) = x_0s$, then $f_{x_0}$ is continuous and $F= f_{x_0}^{-1}(T)$, and hence is clopen.
\end{proof}

In particular, if $\eta\colon A^*\to M_L$ is the syntactic morphism, then $\eta\colon A^*\to E_L$ (abusing notation) recognizes $L$ via the set $F_L = \{s\in M_L\mid \chi_Ls(\varepsilon)=1\}$.

\begin{Lemma}\label{functorial}
Let $\mathscr A=(X,A,\delta)$ and $\mathscr B=(Y,A,\lambda)$ be topological dynamical systems.
Suppose that $\p\colon \mathscr A\to \mathscr B$ is a surjective continuous morphism of topological dynamical systems.  Then $\p$ induces a unique surjective continuous homomorphism $\psi\colon E(\mathscr A)\to E(\mathscr B)$ of enveloping monoids such that \[\xymatrix{A^*\ar[rr]^{\delta^*}\ar[rd]_{\lambda^*} & & E(\mathscr A)\ar[ld]^{\psi}\\ & E(\mathscr B)& }\] commutes.
\end{Lemma}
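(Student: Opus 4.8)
The plan is to build $\psi$ by first pinning down its only possible value on the dense transition submonoid $M(\mathscr A)=\delta^*(A^*)$, namely $\psi(\delta^*(w))=\lambda^*(w)$, and then to extend to the closure $E(\mathscr A)=\ov{M(\mathscr A)}$. Rather than extend ``by hand'' through nets (which forces one to check that limits exist and are net-independent), I would define $\psi$ directly on all of $E(\mathscr A)$ by an intertwining relation, which makes the homomorphism property and continuity transparent.

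First I would record the word-level intertwining: since $\p(xa)=\p(x)a$ for $a\in A$, an induction gives $\p(x\cdot w)=\p(x)\cdot w$ for every $w\in A^*$, equivalently $\p(x\,\delta^*(w))=\p(x)\,\lambda^*(w)$. The crucial step is to upgrade this to all of $E(\mathscr A)$: I claim that for every $s\in E(\mathscr A)$, if $\p(x)=\p(x')$ then $\p(xs)=\p(x's)$. This holds for $s\in M(\mathscr A)$ by the word-level relation, and it passes to the closure precisely because the action of $E(\mathscr A)$ on $X$ is an Ellis action, so that $s\mapsto xs$ and $s\mapsto x's$ are continuous: choosing a net $\delta^*(w_\alpha)\to s$, the equalities $\p(x\,\delta^*(w_\alpha))=\p(x'\,\delta^*(w_\alpha))$ survive in the limit by continuity of $\p$ and the Hausdorffness of $Y$. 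This congruence property is the main obstacle; once it is in hand everything else is formal.

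Granting it, I would define $\psi(s)\in Y^Y$ by the rule $\p(x)\cdot\psi(s)=\p(xs)$, which is unambiguous by the claim and total because $\p$ is surjective; equivalently, $\psi$ is characterized by the identity $\p(xs)=\p(x)\,\psi(s)$ for all $x\in X$ and $s\in E(\mathscr A)$. Commutativity of the triangle, $\psi(\delta^*(w))=\lambda^*(w)$, is then immediate from the word-level relation together with surjectivity of $\p$. The homomorphism property follows by applying the identity twice: $\p(x)\,\psi(st)=\p(x\,st)=\p((xs)t)=\p(xs)\,\psi(t)=\p(x)\,\psi(s)\psi(t)$, and cancelling $\p(x)$ over all $x$ (again using surjectivity) yields $\psi(st)=\psi(s)\psi(t)$, with $\psi(1)=\mathrm{id}_Y$. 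Continuity is once more the Ellis action at work: if $s_\alpha\to s$, then for each $y=\p(x)$ one has $\psi(s_\alpha)(y)=\p(xs_\alpha)\to\p(xs)=\psi(s)(y)$, which is exactly pointwise convergence in $Y^Y$.

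Finally I would close off the topological points. Since $\psi$ is continuous and $\psi(M(\mathscr A))=M(\mathscr B)$, continuity forces $\psi(E(\mathscr A))=\psi(\ov{M(\mathscr A)})\subseteq\ov{M(\mathscr B)}=E(\mathscr B)$, so $\psi$ really lands in the enveloping monoid; and as $E(\mathscr A)$ is compact, its image is a closed subset of $E(\mathscr B)$ containing the dense submonoid $M(\mathscr B)$, whence $\psi$ is surjective. Uniqueness requires nothing new: any continuous homomorphism making the triangle commute agrees with $\psi$ on the dense set $M(\mathscr A)=\delta^*(A^*)$, and two continuous maps into the Hausdorff space $E(\mathscr B)$ agreeing on a dense set must coincide.
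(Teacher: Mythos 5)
Your proposal is correct and follows essentially the same route as the paper: both define $\psi$ by the intertwining identity $\p(xs)=\p(x)\psi(s)$, verify well-definedness by taking a net $\delta^*(w_\alpha)\to s$ and using the Ellis action (pointwise-convergence) topology together with continuity of $\p$ and Hausdorffness of $Y$, and then obtain the homomorphism property, continuity, surjectivity, and uniqueness by the same formal and density arguments. Your isolation of the congruence claim as a separate step, and your more explicit compactness argument for surjectivity, are merely organizational refinements of the paper's proof, not a different method.
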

\begin{proof}
Uniqueness and surjectivity are immediate from the density of the transition monoid in the enveloping monoid. Define $\psi\colon E(\mathscr A)\to Y^Y$ by $y\psi(s) = \p(xs)$ where $x$ is any $\p$-preimage of $y$.  First we show that $\psi$ is well defined.  Indeed, suppose $\p(x')=y$.  Let $w_{\alpha}$ be a net from $A^*$ with $\delta^*(w_{\alpha})\to s$.  Then since $t\mapsto xt$ is continuous, we have $xw_{\alpha}\to xs$ and $x'w_{\alpha}\to x's$.  Then $\p(xs)=\lim \p(xw_{\alpha}) = \lim yw_{\alpha}$ and similarly $\p(x's)=\lim yw_{\alpha}$ showing that $y\psi(s)$ is well defined.  Suppose $y=\p(x)$.  Then $y\psi(s) = \p(xs)$ and so $(y\psi(s))\psi(t) = \p((xs)t) = \p(x(st))= y\psi(st)$.  It follows that $\psi$ is a homomorphism. Let us show that
$\psi$ is continuous.  Suppose $s_{\alpha}$ is a net converging to $s$ and fix $y\in Y$.  Choose $x$ with $\p(x)=y$.  Then $xs_{\alpha}\to xs$ and so $y\psi(s_{\alpha}) = \p(ys_{\alpha})\to \p(ys) = y\psi(s)$, establishing the continuity of $\psi$.   Since $\psi(\delta^*(A^*)) = \lambda^*(A^*)$, we conclude that $\psi(E(\mathscr A))= E(\mathscr B)$, completing the proof.
\end{proof}

We are now ready to establish the desired minimality property of $E_L$.

\begin{Thm}\label{minimalmonoid}
Let $\p\colon A^*\to M$ be a homomorphism recognizing $L\subseteq A^*$ by a clopen subset $F$ and suppose that $\p(A^*)$ is dense.  Then there is a unique continuous surjective homomorphism $\psi\colon M\to E_L$ such that
\[\xymatrix{A^*\ar[rr]^{\p}\ar[rd]_{\eta} & & M\ar[ld]^{\psi}\\ & E_L& }\] commutes.
\end{Thm}
\begin{proof}
Uniqueness is clear from density.  For existence, consider the topological automaton $\mathscr A=(M,A,1_M,\mu,F)$ where $\mu(m,a) = m\p(a)$.  The fact that $\p(a)$ is in the topological center says exactly that $m\mapsto m\p(a)$ is continuous. It is straightforward to verify $E(\mathscr A)=M$ acting via right multiplication.  Clearly $\mathscr A$ recognizes $L$.   Proposition~\ref{minimization} then provides a continuous surjective morphism $\rho\colon (M,A,\mu)\to (\mathscr O_L,A,\delta)$.  The theorem now follows from Lemma~\ref{functorial}.
\end{proof}

\begin{Example}
Let $A$ have one symbol, and so we can identify $A^*$ with $\mathbb N$.  Then $\{0,1\}^{\mathbb N}$ is the usual space of infinite words and the generator of $\mathbb N$ acts by the shift map removing the first letter.  Suppose $L\subseteq \mathbb N$ is a language such that $\chi_L$ has a dense orbit.  For instance, if $w_0,w_1,\ldots$ is the length-lexicographic enumeration of $\{0,1\}^*$, then the language whose characteristic sequence is the word $w_0w_1w_2\cdots$ has a dense orbit.  Thus $\mathscr O_L =\{0,1\}^{\mathbb N}$.  It is well known that the enveloping monoid of the shift map on $\{0,1\}^{\mathbb N}$ is isomorphic to the Stone-\v{C}ech compactification $\beta \mathbb N$ of $\mathbb N$~\cite{Hindman}.  This shows that one cannot recognize every language by a metrizable compact left semitopological monoid.
\end{Example}

Let us observe that if one uses compact topological monoids (where multiplication is jointly continuous), then one obtains nothing beyond rational languages and so the semitopological nature of the recognizing monoid is crucial.

\begin{Prop}
Suppose $\p\colon A^*\to M$ is a homomorphism recognizing $L$ with $M$ a compact topological monoid.  Then $L$ is rational.
\end{Prop}
\begin{proof}
Let $F$ be a clopen subset of $M$ such that $\p\inv (F)=L$.  It is well known that the syntactic congruence of a clopen subset of a compact topological monoid $M$ is open (as a subset of $M\times M$) (c.f.~\cite{qtheor}).  Hence there is a continuous homomorphism $\psi\colon M\to N$ with $N$ a finite monoid such that $\psi\inv \psi(F) =F$.  Then $(\psi\p)\inv (\psi(F)) = L$ and so $L$ is rational.
\end{proof}

It is interesting to ask for which languages $L\subseteq A^*$ is the dynamical system $(\mathscr O_L,A,\delta)$ minimal. In the case that $L$ is rational, this corresponds to the minimal automaton for $L$ being strongly connected.

\def\malce{\mathbin{\hbox{$\bigcirc$\rlap{\kern-7.75pt\raise0,50pt\hbox{${\tt
  m}$}}}}}\def\cprime{$'$} \def\cprime{$'$} \def\cprime{$'$} \def\cprime{$'$}
  \def\cprime{$'$} \def\cprime{$'$} \def\cprime{$'$} \def\cprime{$'$}
  \def\cprime{$'$}


\end{document}